\title{Continuous Blooming of Convex Polyhedra}
\author{%
  Erik D. Demaine%
    \thanks{MIT Computer Science and Artificial Intelligence Laboratory,
      32 Vassar St., Cambridge, MA 02139, USA,
      \protect\url{{edemaine,mdemaine}@mit.edu}}
    \thanks{Partially supported by NSF CAREER award CCF-0347776.}
\and
  Martin L. Demaine\footnotemark[1]
\and
  Vi Hart%
    \thanks{\protect\url{http://vihart.com}}
\and
  John Iacono%
    \thanks{Department of Computer Science and Engineering,
      Polytechnic Institute of NYU, Brooklyn, NY, USA.
      \protect\url{http://john.poly.edu}}
\and
  Stefan Langerman%
    \thanks{Ma\^itre de recherches du FNRS, D\'epartment d'Informatique,
      Universit\'e Libre de Bruxelles, Brussels, Belgium,
      \protect\url{stefan.langerman@ulb.ac.be}}
\and
  Joseph O'Rourke%
    \thanks{Department of Computer Science, Smith College,
        Northampton, MA 01063, USA, \protect\url{orourke@cs.smith.edu}}
}
\makeatletter \hypersetup{pdftitle={\@title}}}
\gdef\fps@figure{!htbp}}
\let\realbfseries=\bfseries
\def\bfseries{\realbfseries\boldmath}
\newtheorem{theorem}{Theorem}
\newtheorem{lemma}[theorem]{Lemma}
\newtheorem{claim}[theorem]{Claim}
\newtheorem{corollary}[theorem]{Corollary}
\newtheorem{algorithm}{Algorithm}
\newtheorem{open}{Open Problem}
\def\GrabProofArgument[#1]{ #1: \egroup\ignorespaces}
\def\proof{\noindent\textbf\bgroup Proof%
           \@ifnextchar[{\GrabProofArgument}{: \egroup\ignorespaces}}
\let\epsilon=\varepsilon
\let\eps=\varepsilon
\begin{document}
\maketitle

\begin{abstract}
  We construct the first two continuous bloomings of all convex polyhedra.
  First, the source unfolding can be continuously bloomed.
  Second, any unfolding of a convex polyhedron can be refined (further cut,
  by a linear number of cuts) to have a continuous blooming.
\end{abstract}

\section{Introduction}

A standard approach to building 3D surfaces from rigid sheet material,
such as sheet metal or cardboard, is to design an \emph{unfolding}:
cuts on the 3D surface so that the remainder can unfold (along edge hinges)
into a single flat non-self-overlapping piece.
The advantage of this approach is that many (relatively cheap)
technologies---such as NC machining/milling, laser cutters, waterjet cutters,
and sign cutters---enable manufacture of an arbitrary flat shape
(with hinges) from a sheet of material.
As a result, existence of and algorithms for unfolding have been studied
extensively in the mathematical literature.

Often overlooked in this literature, however, is the second manufacturing step:
can we actually fold the flat shape along the hinges into the desired 3D
surface, without self-intersection throughout the motion?
Some unfoldings have no such motion \cite[Theorem~4]{Biedl-Lubiw-Sun-2005}.
In this paper, we develop such motions for large families of unfoldings of
convex polyhedra.  In particular, we establish for the first time that
every convex polyhedron can be manufactured by folding an unfolding.

\paragraph{Unfolding.}
More precisely, an \emph{unfolding} of a polyhedral surface in 3D consists of
a set of cuts (arcs) on the surface whose removal result in a surface
whose intrinsic metric is equivalent to the interior of a flat
non-self-overlapping polygon,
called the \emph{development} or \emph{unfolded shape} of the unfolding.
The cuts in an unfolding of a convex polyhedron form a tree,
necessarily spanning all vertices of the polyhedron (enabling them to flatten)
\cite{Bern-Demaine-Eppstein-Kuo-Mantler-Snoeyink-2003}.
All unfoldings we consider allow cuts anywhere on the polyhedral surface,
not just along edges.
Four general unfolding algorithms are known for arbitrary convex polyhedra:
the source unfolding \cite{Sharir-Schorr-1986,Miller-Pak-2008},
the star unfolding \cite{Aronov-O'Rourke-1992},
and two variations thereof \cite{Itoh-O'Rourke-Vilcu-2008,Itoh-O'Rourke-Vilcu-2009}.
Positive and negative results for unfolding nonconvex polyhedra can be found in
\cite{Bern-Demaine-Eppstein-Kuo-Mantler-Snoeyink-2003,Damian-Flatland-O'Rourke-2007-epsilon,O'Rourke-2008-orthosurvey}.

\paragraph{Blooming.}
Imagine the faces of the cut surface in an unfolding as rigid plates,
and the (sub)edges of the polyhedron connecting them as hinges.
A \emph{continuous blooming} of an unfolding is a continuous motion of this
plate-and-hinge structure from the original shape of the 3D polyhedron
(minus the cuts) to the flat shape of the development,
while avoiding intersection between the plates throughout the motion.
In 1999, Biedl, Lubiw, and Sun \cite{Biedl-Lubiw-Sun-2005} gave an
example of an unfolding of an orthogonal (nonconvex) polyhedron
that cannot be continuously bloomed.
In 2003, Miller and Pak \cite[Conjecture~9.12]{Miller-Pak-2008}
reported the conjecture of Connelly that every convex polyhedron has a
nonoverlapping unfolding that can be continuously bloomed.
He further conjectured that the blooming can monotonically open all
dihedral angles of the hinges.
More recently, Pak and Pinchasi \cite{Pak-Pinchasi-2009} describe a
simple blooming algorithm for convex polyhedra which they show works
for the finite class of Archimedean solids
(excluding prisms and antiprisms, by extending existing bloomings of Platonic solids)
but fails on other polyhedra.
No other nontrivial positive results have been established.

\paragraph{Our results.}
We prove Connelly's conjecture by giving the first two general algorithms
for continuous blooming of certain unfoldings of arbitrary convex polyhedra,
which also monotonically open all hinge dihedral angles.
Both of our algorithms have a relatively simple structure:
they perform a sequence of linearly many steps of the form
``open one dihedral angle uniformly by angle $\alpha$''.
Thus our algorithms open angles monotonically, uniformly (at constant speed),
and one at a time. 
The challenge in each case is to prove that the motions cause no intersection.

First we show in Section~\ref{Blooming a Refinement of Any Unfolding}
that every unfolding can be refined (further cut, by a linear number of cuts)
into another unfolding with a continuous blooming.  Indeed, we show that
any serpentine unfolding (whose dual tree is a path) has a continuous blooming,
and then standard techniques can refine any unfolding into a serpentine
unfolding.

Next we show in Section~\ref{Blooming the Source Unfolding}
that one particularly natural unfolding, the source unfolding,
has a continuous blooming.  It is in the context of the source unfolding
that Miller and Pak \cite{Miller-Pak-2008} described continuous blooming.

Finally we mention in Section~\ref{Open Problems} an unsolved more-general
form of continuous blooming for the source unfolding that resembles
Cauchy's arm lemma.

\section{Blooming a Refinement of Any Unfolding}
\label{Blooming a Refinement of Any Unfolding}

In this section, we show that any unfolding can be refined into
an unfolding with a continuous blooming.
An unfolding $U$ is a \emph{refinement} of another unfolding $U'$
if the cuts in $U$ form a superset of the cuts in~$U'$.
Our approach is to make the unfolding serpentine:
an unfolding is \emph{serpentine} if the dual tree of the faces
is in fact a path.
Here ``faces'' is a combinatorial notion, not necessarily coinciding
with geometry: we allow artificial edges with flat dihedral angles.
We do require, however, that every two faces adjacent in the dual path
form a nonflat dihedral angle; otherwise, those faces could be merged.
We use the term \emph{facet} when referring to the original
geometric facets of the polyhedron~$Q$.

\begin{lemma}
  Every unfolding can be refined by a linear number of additional cuts
  into a serpentine unfolding.
\end{lemma}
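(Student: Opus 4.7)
The plan is to use a DFS Euler tour of the dual tree $T$ of the given unfolding $U$ to linearize it. First, root $T$ at an arbitrary face and perform a DFS, ordering the children of each face $f$ so that the tree edges incident to $f$ appear in the DFS in the same cyclic order as they lie along the boundary of the polygon $f$ (together with the parent tree edge at $f$). The resulting Euler walk visits each face $f$ exactly $\deg_T(f)$ times (with $\deg_T(\text{root})+1$ visits at the root), and two consecutive visits to $f$ use a pair of tree edges that are consecutive in the cyclic order around the boundary of~$f$.

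Next, for each face $f$ visited $k_f>1$ times, I would add $k_f-1$ cuts inside $f$ to split it into $k_f$ sub-faces, one per visit. Because the DFS order matches the cyclic order of $f$'s tree edges, each new cut can be drawn as a short arc near the boundary of $f$ between two consecutive tree edges (separating the tree edges of one visit from those of the next), and no two such cuts interfere with one another. Each sub-face of $f$ then carries exactly the tree edges of its corresponding visit.

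With these cuts in place, the Euler walk becomes a Hamiltonian path in the dual of the refined unfolding: consecutive entries in the walk are sub-faces that are adjacent in the refined dual via one of the original tree edges of~$T$. Hence the refined dual is a path, so the refinement is serpentine. The total number of new cuts is $\sum_f (k_f-1) = O(n)$, since $\sum_f k_f$ equals one plus twice the number of tree edges of~$T$, which is linear in the number of faces.

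The main obstacle is the geometric realization: one must verify that the new cuts can be drawn inside each face without crossing each other or existing cuts, and, crucially, that the refined unfolding remains a single connected flat development. The cyclic-ordered DFS is what makes this possible, allowing each cut to be placed close to the boundary of $f$ between two consecutive tree edges; drawing the cuts as slits rather than full chords (and checking that each resulting sub-face has exactly the hinges required for its place in the walk) preserves connectedness of the 2D development. Once the geometric placement is set up correctly, the bookkeeping for sub-faces and their path-adjacencies in the refined dual follows immediately from the structure of the Euler tour.
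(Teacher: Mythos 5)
Your high-level plan is the same as the paper's: refine via an Euler tour of the dual tree (the Arkin--Held--Mitchell--Skiena Hamiltonian-triangulation trick), with a linear cut count. But the geometric realization you propose is where the argument breaks, and it is not a detail that can be deferred. The Euler tour traverses each dual-tree edge \emph{twice} (once descending, once ascending), so in the refined dual path the shared edge between faces $f$ and $c$ must supply \emph{two distinct hinges}: one joining ``$f$, visit $i$'' to ``$c$, first visit,'' and another joining ``$c$, last visit'' to ``$f$, visit $i+1$.'' This forces every uncut shared edge to be subdivided by the new cuts. Your cuts are explicitly placed \emph{between} consecutive tree edges, leaving each shared edge intact as a single hinge. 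Concretely, take an internal face $f$ with parent $p$ and two leaf children $c_1,c_2$: if your cuts genuinely partition $f$ into three sub-faces each containing one whole tree edge, then the sub-face holding $e_1$ is joined only to $c_1$, the one holding $e_2$ only to $c_2$, and the one holding $e_p$ only upward --- the refined dual is disconnected and the development falls apart. Alternatively, your suggestion to use ``slits rather than full chords'' to preserve connectedness means the face is never actually split into combinatorially distinct sub-faces at all, so $f$ remains a single dual node of degree $3$ and the dual is not a path. Either way the construction as stated fails.

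The repair is exactly the paper's construction: for each uncut shared edge, add a cut from an interior point (say the centroid) of each incident face to the \emph{midpoint of that shared edge}. These cuts form a tree isomorphic to the dual tree; each face with $d$ tree-neighbors is split into $d$ sectors around its centroid, and each shared edge is split at its midpoint into the two hinges the Euler tour needs. Walking around this cut tree yields a Hamiltonian cycle of sub-faces, and one final cut along a (sub)edge breaks the cycle into a path, for $2(f-1)+1$ cuts total. You should also note explicitly why the refinement remains a valid non-overlapping unfolding: the added cuts are intrinsic to the surface metric, so the refined development has the same closure as the original and inherits its non-self-overlap.
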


\begin{proof}
  The polyhedron and unfolding together define \emph{faces} bounded
  by polyhedron edges and by cuts in the unfolding.
  For every two such faces that share an uncut edge,
  we refine by adding a cut from the centroid of each face to the
  midpoint of the shared edge.
  Figure~\ref{CubeRefine} shows an example.
  These cuts form a tree isomorphic to the
  dual tree of the unfolding, with a node for each face and an arc connecting
  two faces that share an uncut edge in the original unfolding.
  Thus, after the refinement, the remaining subfaces can be connected together
  in a Hamiltonian cycle via an Euler tour around the tree of added cuts.
  (This powerful trick originates in
  \cite[Theorem~2.4]{Arkin-Held-Mitchell-Skiena-1996}
  for the case of triangulations.)
  Finally, we make one additional cut along a (sub)edge
  to divide the dual cycle into a dual path.
  If the original polyhedron and unfolding define $f$ faces,
  the total number of added cuts is $2 (f-1) + 1 = 2f-1$.

\begin{figure}
  \centering
  \includegraphics[scale=0.5]{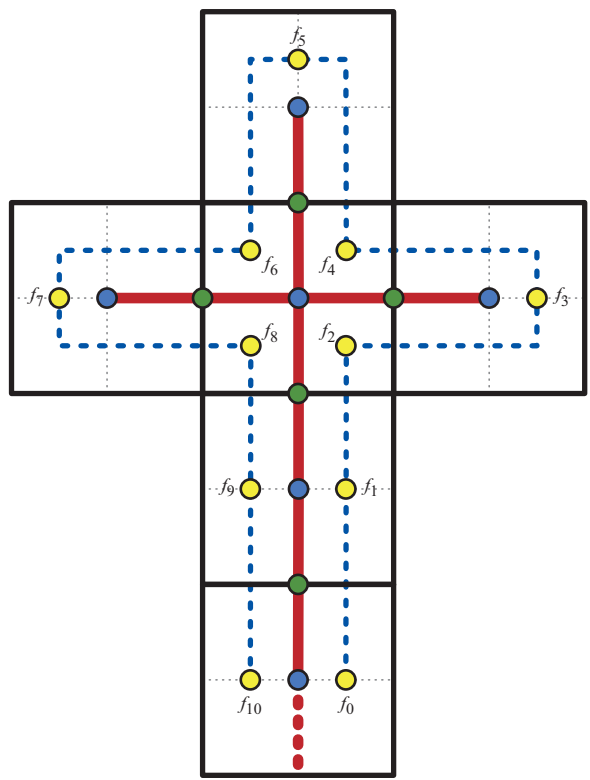}
  \caption{Refinement of the Latin cross unfolding of a cube
    to a serpentine unfolding.  The dashed cut at the base of the cross
    is the one additional cut that divides the cycle to a path.}
  \label{CubeRefine}
\end{figure}

  We observe that any refinement that preserves connectivity (such as this one)
  also preserves that the cuts form a valid unfolding, with a
  non-self-overlapping development.  The additional cuts in the refinement
  are intrinsic to the surface metric, so they can be applied to the unfolding
  just as well both on the 3D polyhedron and developed in the plane.
  Hence the refined unfolding develops in the plane to a region with exactly
  the same closure as the original unfolding, implying that the refined
  unfolding has no (interior) self-overlap if and only if the original
  unfolding has none.
\end{proof}

It remains to prove that such a refinement suffices:

\begin{theorem} \label{Path-Unroll theorem}
  Any serpentine unfolding can be continuously bloomed. 
\end{theorem}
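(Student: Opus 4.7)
The plan is to open the hinges sequentially from the tail of the dual path toward the head, each time rotating a progressively larger already-flattened piece. Number the path faces $F_1, F_2, \ldots, F_n$, with hinge $h_i$ shared by $F_i$ and $F_{i+1}$, and let $\pi_i$ denote the plane of the facet $G_i$ of $Q$ containing $F_i$. The serpentine hypothesis forces $\pi_i \neq \pi_{i+1}$, so the interior dihedral angle $\theta_i$ at $h_i$ lies in $(0, \pi)$. For $i = n-1, n-2, \ldots, 1$ in this order, at stage $i$ I uniformly rotate the rigid piece $P_{i+1} := F_{i+1} \cup \cdots \cup F_n$ about the line of $h_i$, monotonically increasing the dihedral at $h_i$ from $\theta_i$ to $\pi$.

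The inductive invariant is that at the start of stage $i$, the faces $F_1, \ldots, F_i$ sit in their original positions on $Q$, while $F_{i+1}, \ldots, F_n$ together form one flat rigid polygon $P_{i+1}$ lying in $\pi_{i+1}$, with $F_{i+1}$ at its original in-situ location and the others extending outward exactly as in the planar development of the subpath $F_{i+1}, \ldots, F_n$. After stage $i$, $P_{i+1}$ has become flat in $\pi_i$, abutting $F_i$ across $h_i$; together with $F_i$ it is $P_i$, the planar development of $F_i, \ldots, F_n$, which is non-self-overlapping because the full serpentine development is by hypothesis non-self-overlapping. After stage $1$ the whole surface is laid out as the full flat development, completing the blooming.

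The geometric heart of the argument is the claim that throughout stage $i$, the plane $\pi(t)$ carrying the rotating $P_{i+1}$ is a supporting plane of the convex polyhedron $Q$. Both endpoint planes $\pi_{i+1}$ and $\pi_i$ support $Q$ along the edge $h_i$, and the set of planes through $h_i$ supporting $Q$ forms a continuous one-parameter fan of angular width exactly $\pi - \theta_i$, which is precisely the angle through which we rotate. Consequently $Q$ lies in one closed half-space bounded by $\pi(t)$, and in any cross-section perpendicular to $h_i$ the cross-section of $Q$ is a convex polygon with a corner on the hinge, the cross-section of $P_{i+1}$ is a ray from that corner into the exterior wedge, and every stationary face $F_j$ for $j \leq i$ contributes to the closed interior wedge. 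Hence $P_{i+1}$ meets $Q$ only along $h_i$ itself (the expected hinge contact with $F_i$, not a collision), avoiding all other stationary faces; self-overlap inside the rigid $P_{i+1}$ is excluded by the non-self-overlap of the serpentine development.

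I expect the main obstacle to be the supporting-plane claim together with the degenerate times $t \in (0, 1)$ at which $\pi(t)$ happens to contain another facet of $Q$; generically no such $t$ exists, and in degenerate configurations the putative overlap occurs only at isolated instants and is handled by a small perturbation or continuity argument. It also remains to verify that the two endpoint configurations of each stage are themselves collision-free, which reduces to the planar non-self-overlap of $P_i$ in $\pi_i$ combined with the observation that the portion of $P_i$ past $h_i$ lies in the half-plane of $\pi_i$ opposite $G_i$, preventing coincidental overlap with any other sub-face stationed inside $G_i$.
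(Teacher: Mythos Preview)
Your algorithm is exactly the paper's Path-Unroll, and your supporting-plane argument correctly establishes that the rotating flat piece $P_{i+1}$ never \emph{crosses} the polyhedron; this is the content of the paper's Lemma~3. The gap is in your treatment of the coplanar (``touching'') configurations, which you dispose of with the assertion that ``the portion of $P_i$ past $h_i$ lies in the half-plane of $\pi_i$ opposite $G_i$.'' That assertion is false in general. Take a nearly-flat square pyramid with base $B$ and lateral faces $L_1,L_2,L_3,L_4$, and consider the serpentine path $B,L_1,L_2,L_3,L_4$. At the end of the stage that lays $L_2\cup L_3\cup L_4$ into the plane of $L_1$, those three triangles form a fan about the apex whose total angle is close to $3\pi/2$, so the fan sweeps well past the line through the hinge $h$ between $L_1$ and $L_2$. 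Thus the developed tail need not stay in the half-plane opposite $G_i$; it can curl back across the hinge line. Once that half-plane claim fails, nothing in your argument prevents the curled-back tail from overlapping a still-in-situ subface $F_j\subset G_i$ with $j<i$ (such subfaces exist whenever the path revisits a facet, as it must after the refinement of Lemma~1). The same curling-back also undermines your claim that during rotation $P_{i+1}$ meets $Q$ only along $h_i$: if the tail crosses the line of $h_i$, then for all $t\in(0,1)$ it meets the full polyhedron edge $e\supseteq h_i$, possibly along $e\setminus h_i$, which is genuine contact with faces still on $Q$. Your ``perturbation or continuity'' remark does not help here, because these contacts persist over an interval of $t$ and are not isolated degeneracies.

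The paper handles exactly these two failure modes by abandoning pure Path-Unroll. First, Path-TwoStep under-opens each hinge to $\pi-\varepsilon$ before moving on, so that the plane carrying the flat prefix is always \emph{strictly} supporting and never coincides with a facet plane; a central-projection argument then shows the only remaining contact is one-dimensional, along the edge shared by $f_{i-1}$ and $f_i$ at the discrete times $t=\Psi_i-\varepsilon$. Second, Path-Waltz adds a $\delta$ look-ahead (cracking open the next hinge before fully flattening the previous one) so that even this edge contact is avoided. These $\varepsilon$/$\delta$ devices are the missing ingredient: your proposal would become correct if you replaced the half-plane claim by this two-parameter scheduling and the accompanying projection argument.
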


\begin{corollary}
  Any unfolding of a polyhedron can be refined by a linear number of
  additional cuts into an unfolding with a continuous blooming.
\end{corollary}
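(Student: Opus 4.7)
The plan is simply to chain the two previous results. Given any unfolding $U$ of a convex polyhedron $Q$, I would first invoke the Lemma to produce a refinement $U'$ of $U$ that is serpentine, with at most $2f-1$ additional cuts where $f$ is the number of faces of the original unfolding. Since $f$ is linear in the combinatorial complexity of $Q$ (each facet contributes at most a constant number of faces in $U$, and $U$ itself, being a tree-cut unfolding, has $O(n)$ faces in terms of the $n$ vertices of $Q$), the number of added cuts is linear. I would then apply Theorem~\ref{Path-Unroll theorem} to $U'$ to obtain a continuous blooming of the refinement.

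The only things to double-check are bookkeeping items rather than geometric obstacles. First, I would confirm that $U'$ is indeed a valid unfolding: this is exactly the closing observation in the proof of the Lemma, which shows that the refined cuts are intrinsic to the surface and leave the developed region's closure unchanged, so non-self-overlap is inherited from~$U$. Second, I would verify that $U'$ meets the hypothesis of Theorem~\ref{Path-Unroll theorem}: by construction, the dual graph of $U'$ is a Hamiltonian path on the subfaces created by the refinement, so $U'$ is serpentine in the sense defined before the Lemma (flat dihedral angles along artificial edges are explicitly permitted there).

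The main obstacle in this development is not the corollary itself but the underlying Theorem~\ref{Path-Unroll theorem}, which does the real geometric work of exhibiting a noncrossing motion for a serpentine unfolding; the corollary is a one-line composition of the Lemma and the Theorem. Consequently, my ``proof'' would consist of little more than asserting these two invocations and remarking that the total cut count is $|U| + (2f-1) = O(n)$, so the refinement adds only a linear number of cuts as claimed.
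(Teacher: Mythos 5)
Your proposal matches the paper exactly: the corollary is obtained by composing the refinement Lemma (which already guarantees a linear number of added cuts, namely $2f-1$) with Theorem~\ref{Path-Unroll theorem}, and the paper gives no further argument beyond this chaining. Your additional bookkeeping checks (validity of the refined unfolding and the serpentine hypothesis) are exactly the observations already made in the Lemma's proof, so the proposal is correct and essentially identical in approach.
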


Toward Theorem~\ref{Path-Unroll theorem},
we start with the following simple blooming algorithm.
Figure~\ref{CubeUnroll} shows an example of the algorithm in action.

\begin{algorithm}[Path-Unroll]
  Suppose we are given a serpentine unfolding of a polyhedron $Q$
  whose dual path is $P = \langle f_0, f_1, \dots, f_k \rangle$.
  For $i = 1, 2, \dots, k$ in sequence,
  uniformly open the dihedral angle between $f_{i-1}$ and $f_i$
  until those two faces become coplanar.
\end{algorithm}

\begin{figure}
  \centering
  \includegraphics{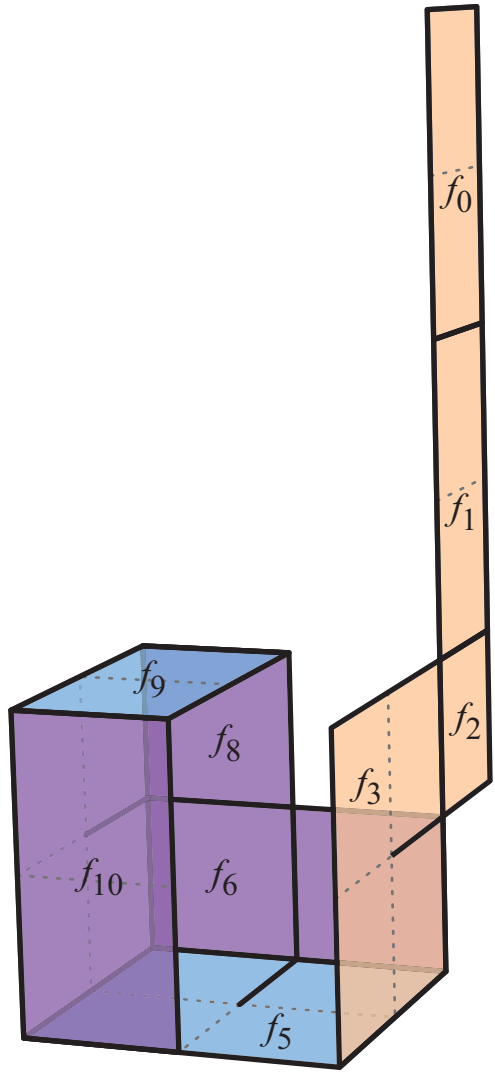}
  \caption{The Path-Unroll algorithm applied to the refined cube unfolding
    of Figure~\protect\ref{CubeRefine}, after the first $i=3$ steps.}
  \label{CubeUnroll}
\end{figure}

The Path-Unroll algorithm is almost what we need: we show below that
it only causes touching, not crossing, between faces that are coplanar
with a facet of $Q$ and touching between faces whose supporting planes
intersect in a line supporting an edge of~$Q$.
Precisely, two (convex) sets are \emph{noncrossing} if there is a plane $H$
such that both sets lie in opposite closed halfspaces bounded by~$H$,
while two sets \emph{touch} if they have a common intersection but do not cross.

\begin{lemma}\label{lem:path-unroll}
  The Path-Unroll algorithm causes no crossing between faces of the unfolding.
  Furthermore, (two-dimensional) touching between faces can occur only at the
  end of step $i$ between faces coplanar with the facet containing~$f_i$,
  and between the beginning and the end of step $i$ at the edge
  of $Q$ shared by $f_{i-1}$ and~$f_i$.
\end{lemma}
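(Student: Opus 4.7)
I would prove the lemma by induction on the step index $i$, maintaining the invariant that at the start of step $i$: the already-unfolded block $A = \{f_0, \ldots, f_{i-1}\}$ lies in a fixed plane $\Pi$ (the plane of $f_0$'s original facet), and the remaining block $B = \{f_i, \ldots, f_k\}$ is a rigid isometric copy of the corresponding subsurface of $Q$, attached to $A$ along the hinge $h_i$ at the original dihedral $\alpha_i \in (0, \pi)$, lying on one distinguished (``above'') side of $\Pi$. Setting local coordinates so that $h_i$ is the $x$-axis, $\Pi$ is $\{z = 0\}$, $f_{i-1}$ lies in $y \leq 0$, and ``above'' means $z \geq 0$, my central claim is that in cylindrical coordinates around $h_i$, every point of $B$ at the start of step $i$ has angular coordinate $\phi \in [\pi - \alpha_i, \pi]$.

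This angular confinement uses convexity of $Q$ twice. The facet planes $\Pi_F$ (of the facet $F$ containing $f_{i-1}$) and $\Pi_{F_i}$ (of the facet $F_i$ containing $f_i$) are both supporting planes of $Q$; they meet at $h_i$ with dihedral $\alpha_i$, and $Q$ lies in the wedge between them on the interior side. Under $B$'s orientation-preserving rigid motion to its start-of-step-$i$ position, these planes map respectively to $\Pi$ (on the $\phi = \pi$ side, containing $f_{i-1}$) and to the plane of $f_i$ (at $\phi = \pi - \alpha_i$), so $B$ lies in the image wedge $\phi \in [\pi - \alpha_i, \pi]$.

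Step $i$ then rotates $B$ rigidly around $h_i$ by an angle $\delta$ increasing uniformly from $0$ to $\pi - \alpha_i$, decreasing every angular coordinate by $\delta$. So at time $\delta$ the angular range of $B$ is $[\pi - \alpha_i - \delta, \pi - \delta]$. For intermediate $\delta \in (0, \pi - \alpha_i)$, this interval lies strictly inside $(0, \pi)$, so $B \setminus h_i$ lies strictly in $\{z > 0\}$; combined with $A \subset \{z = 0\}$, this gives $A \cap B = h_i$ (a one-dimensional set) throughout the interior of step $i$, excluding both crossings and 2D touchings. At the endpoint $\delta = 0$, any extra touching sits on the boundary $\phi = \pi$, i.e., among subfaces of $B$ that lie in facet $F$ (coplanar with $f_{i-1}$); these touchings are inherited from the end of step $i-1$. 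At $\delta = \pi - \alpha_i$, any extra touching sits on $\phi = 0$, i.e., among subfaces of $B$ that lie in facet $F_i$ (coplanar with $f_i$), which are exactly the new ``faces coplanar with the facet containing $f_i$'' allowed by the lemma. Since $A$ is static and $B$ moves rigidly during step $i$, no new crossings arise within $A$ or within $B$. Finally, at the end of step $i$, $B \setminus \{f_i\}$ has angular range $[0, \alpha_i] \subset [0, \pi]$, so it still lies above $\Pi$, reestablishing the invariant for step $i+1$.

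The main obstacle is verifying that $B$'s rigid motion carries $\Pi_F$ to $\Pi$ specifically, and not to the alternative plane at dihedral $\alpha_i$ on the opposite side of the plane of $f_i$. The ``above'' convention, maintained inductively through the composition of orientation-preserving rotations, fixes this orientation and pins down the image of $\Pi_F$, completing the confinement argument.
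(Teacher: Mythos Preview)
Your argument is correct and rests on the same idea as the paper's proof---convexity forces the plane of the already-unfolded prefix to be a supporting plane of (the copy of) $Q$ carrying the suffix, so prefix and suffix are separated by that plane---but you and the paper work in dual reference frames. The paper keeps $Q$ fixed and observes that the plane containing the moving prefix $f_0,\dots,f_{i-1}$ stays tangent to $Q$ throughout step $i$ (touching $Q$ in the facet of $f_{i-1}$ at the start, the edge between $f_{i-1}$ and $f_i$ in the middle, and the facet of $f_i$ at the end); this yields the lemma in three sentences. You instead keep the prefix fixed in $\Pi$ and track the angular range of the rigidly moving suffix in cylindrical coordinates about $h_i$. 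The wedge bound $\phi\in[\pi-\alpha_i,\pi]$ is exactly the supporting-plane statement rewritten, and your ``main obstacle'' about which side $\rho(\Pi_F)$ lands on dissolves once one notes that at the start of step $i$ the dihedral at $h_i$ is still unopened, so the same rigid motion has been applied to $f_{i-1}$ as to $B$; hence $\rho(\Pi_F)=\Pi$ automatically.

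Two small points. First, you never say why $A$ itself is non-self-overlapping; the paper handles this by observing that $A$ is (congruent to) a subset of the final planar development, which is assumed nonoverlapping. Second, your claim $A\cap B=h_i$ during the interior of step $i$ is slightly too strong: $B$ may meet the $x$-axis along the full polyhedron edge $e\supseteq h_i$, not just $h_i$, and the lemma allows touching along that whole edge. Neither point affects the correctness of your overall argument.
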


\begin{proof}
  The Path-Unroll algorithm has the invariant that, at step $i$ of the algorithm,
  the prefix $f_0, f_1, \dots, f_{i-1}$ of faces lies entirely in the plane
  containing $f_{i-1}$, and because this planar prefix appears as a subset
  of the final (nonoverlapping) unfolding, it does not self-intersect.
  The suffix $f_i, f_{i+1}, \dots, f_k$ of faces is a subset of the original
  polyhedral surface, so does not self-intersect.

  It remains to show that the prefix and suffix do not cross each other
  and intersections occur only as described in the statement of the lemma.
  At the beginning of step $i$ of the path unroll algorithm, the plane containing
  $f_{i-1}$ intersects the polyhedron at precisely the facet of $Q$
  containing $f_{i-1}$.  At the end
  of step~$i$, the plane intersects the polyhedron at precisely the
  facet of $Q$ containing $f_i$.
  In the middle of step~$i$, the plane intersects the polyhedron at
  precisely the edge of $Q$ bounding $f_{i-1}$ and~$f_i$;
  at all times, the plane remains tangent to the polyhedron.
  Thus the prefix $f_0, f_1, \dots, f_{i-1}$ only touches
  the suffix $f_i,f_{i+1},\ldots,f_k$ at a facet of $Q$ at the
  beginning and end of step $i$, and at an edge of $Q$ during step~$i$.
\end{proof}

To avoid touching at faces, we modify the algorithm slightly.

\begin{algorithm}[Path-TwoStep]
  Suppose we are given a serpentine unfolding of a polyhedron $Q$
  whose dual path is $P = \langle f_0, f_1, \dots, f_k \rangle$. Let
  $0<\phi_i<\pi$ be the dihedral angle between $f_{i-1}$ and $f_i$,
  let $\psi_i=\pi-\phi_i$ be the required folding angle,
  and let $\Psi_i=\sum_{j=1}^i \psi_j$ be the prefix sum for $i=1,2,\ldots,k$.
  Choose $\eps>0$ small enough. 
  \begin{itemize}
  \item At time $t\in[0,\Psi_1-\eps]$, uniformly open the dihedral angle between $f_0$
  and $f_1$ until they form a dihedral angle of $\pi-\eps$.
  \item For $i = 2, 3, \dots, k$:
    \begin{itemize}
    \item At time $t\in[\Psi_{i-1}-\eps,\Psi_i-2\eps]$,
      uniformly open the dihedral angle between $f_{i-1}$ and $f_i$ until they
      form a dihedral angle of $\pi-\eps$.
    \item At time $t\in[\Psi_i-2\eps,\Psi_i-\eps]$,
      uniformly open the dihedral angle between $f_{i-2}$ and $f_{i-1}$
      until those two faces become coplanar.
    \end{itemize}
  \item At time $t\in[\Psi_k-\eps,\Psi_k]$, uniformly open the dihedral angle between $f_{k-1}$ and $f_k$
  until those two faces become coplanar.
  \end{itemize}
\end{algorithm}

This algorithm avoids two-dimensional touching but may still cause
one-dimensional touching.

\begin{lemma}\label{lem:twostep}
  The Path-TwoStep algorithm causes no crossing between faces of the unfolding.
  Furthermore, touching can occur only at time $t=\Psi_i-\eps$ for $i=1,2,\dots,k$
  at the edge of $Q$ shared by $f_{i-1}$ and~$f_i$.
\end{lemma}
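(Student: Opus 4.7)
My plan is to adapt the tangent-plane sweep argument from Lemma~\ref{lem:path-unroll} to the two-phase structure of Path-TwoStep.  At every instant during the algorithm, the moved faces form a prefix lying in at most two planes: $P_1$ contains the already-flattened faces and $P_2$ contains the current leading face, meeting along an internal hinge of the prefix at dihedral angle $\pi-\eps$ (or coinciding, $P_1=P_2$, at the checkpoint instants $t=\Psi_i-\eps$).  The stationary suffix remains on~$Q$, with the caveat that between phases~1 and~2 of step~$i$ its leading face $f_{i-1}$ is tilted by~$\eps$ off the facet~$F_{i-1}$.

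For phase~1 of step~$i$, the entire prefix $f_0,\ldots,f_{i-1}$ rotates rigidly about~$e_i$ while hinge~$i$ opens from $\phi_i$ to $\pi-\eps$.  The sweep argument of Lemma~\ref{lem:path-unroll} applies directly to~$P_2$: it starts at~$F_{i-1}$, ends $\eps$ short of~$F_i$, and stays tangent to~$Q$ along~$e_i$ throughout, so $P_2\cap Q = e_i$ in the interior of the phase.  For~$P_1$ a separate argument is required: at the start of phase~1, $P_1$ equals~$F_{i-2}$ rotated about~$e_{i-1}$ by $\psi_{i-1}-\eps$, which (for $\eps<\min_j\psi_j$) lies strictly inside the tangent wedge of~$Q$ along~$e_{i-1}$ and is thus a supporting plane touching~$Q$ only along~$e_{i-1}$; as the prefix subsequently rotates about~$e_i$ in the opening direction, I argue by convexity that $P_1$ continues to keep all of~$Q$ on one side.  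For phase~2 of step~$i$, only the sub-prefix $f_0,\ldots,f_{i-2}$ rotates, this time about~$e_{i-1}$ while hinge $i-1$ opens from $\pi-\eps$ up to~$\pi$; here $P_1$ explicitly sweeps through supporting planes of~$Q$ along~$e_{i-1}$ and terminates flush with~$P_2$.

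At $t=\Psi_i-\eps$ the two planes coincide into a single plane~$P$ tangent to~$Q$ exactly along~$e_i$, so the coplanar prefix slab $f_0,\ldots,f_{i-1}$ meets the suffix one-dimensionally along~$e_i$, matching the claim.  At every other time $P_1\neq P_2$ and each plane is tangent to~$Q$ only at an edge already interior to the prefix, so the prefix meets the suffix only through its preexisting hinge attachments.  The main obstacle is verifying that $P_1$ remains a supporting plane of~$Q$ during phase~1 of step~$i$, since the rotation axis~$e_i$ differs from the tangent edge~$e_{i-1}$; I would dispatch this with a convexity and continuity argument, observing that the opening rotation carries~$e_{i-1}$ (and the rest of the prefix attached to it) monotonically away from the interior of~$Q$, so~$P_1$ cannot cross any vertex of~$Q$ during the sweep.
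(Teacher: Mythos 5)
Your overall architecture matches the paper's: at every instant the prefix lies in at most two planes, and the lemma reduces to showing that each of them separates the prefix from everything else. But the step you yourself flag as ``the main obstacle''---that $P_1$ remains a separating plane during phase~1 of step~$i$---is exactly where the content of the lemma lives, and the argument you offer for it does not work. The inference ``the opening rotation carries $e_{i-1}$ monotonically away from the interior of~$Q$, so $P_1$ cannot cross any vertex of~$Q$'' is a non sequitur: $P_1$ is rigidly hinged to $f_{i-1}$ along $e_{i-1}$ at the \emph{fixed} dihedral angle $\pi-\eps$, so as $f_{i-1}$ rotates about $e_i$ the plane $P_1$ simultaneously translates and tilts, and a plane whose anchor line recedes from a convex body can still rotate so as to slice into the body on the far side. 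Relatedly, once $e_{i-1}$ has lifted off $Q$, ``$P_1$ is a supporting plane of $Q$ along $e_{i-1}$'' (which you also assert in your phase-2 description) is not even the right statement: what must be shown is that $P_1$ supports the convex hull of $Q$ together with the displaced face $f_{i-1}$, touching it only along~$e_{i-1}$.

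The paper supplies the two ingredients you are missing. First, it projects centrally from $q_i=\ell_i\cap\ell_{i-1}$ onto a plane perpendicular to the bisector of the two lines; since every plane in play contains $q_i$, the configuration collapses to a convex polygon $Q'$, a segment $f'_{i-1}$ rotating about the point $\ell'_i$ while staying tangent to $Q'$ (this part is your Lemma~\ref{lem:path-unroll} sweep), and a line $H'$ through the moving point $\ell'_{i-1}$, so the separation claim becomes an elementary planar tangency statement. Second, that tangency holds only under an explicit quantitative condition on $\eps$: $\pi-\eps$ must exceed the dihedral angle at $\ell_{i-1}$ of the convex hull of the suffix $f_{i-1},f_i,\dots,f_k$ in the extreme configuration where the hinge at $e_i$ has been opened flat. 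This is a genuine, geometry-dependent constraint on $\eps$ that your condition $\eps<\min_j\psi_j$ does not capture; without it (or some equivalent), the ``convexity and continuity'' appeal cannot be completed. Your description of the checkpoint instants $t=\Psi_i-\eps$ and of the $P_2$ sweep is correct, but as written the proof has a hole at its central claim.
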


\begin{proof}
Let $\ell_j$ be the supporting line of the edge shared by $f_{j-1}$ and~$f_j$.
Let $q_j$ be the intersection point (in projective space) between the
lines $\ell_j$ and $\ell_{j-1}$, which lie in the common plane of~$f_{j-1}$.
At times $t\in[\Psi_{i-1}-\eps,\Psi_i-\eps]$, consider the central projection of $Q$ and the unfolding,
from point $q_i$ onto a plane perpendicular to the bisector of
$\ell_i$ and $\ell_{i-1}$, as shown in Figure~\ref{fig:TwoStep}.
(If $q_i$ is a point at infinity, this projection is an orthogonal projection
onto a plane perpendicular to the direction of~$q_i$.)
\begin{figure}
  \centering
  \includegraphics[scale=0.6]{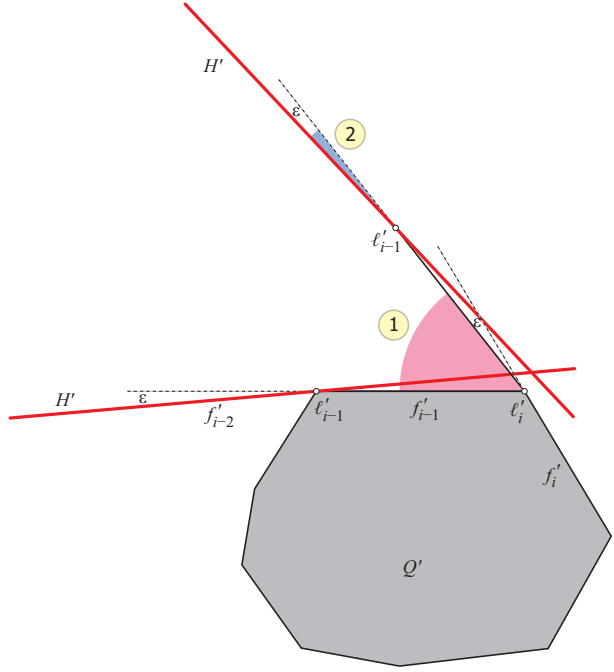}
  \caption{The central projection from point~$q_i$.
    During $t \in [\Psi_{i-1}-\eps,\Psi_i-2\eps]$, rotation (marked 1 in the figure) occurs in projection about $\ell'_i$.
    During $t \in [\Psi_i-2\eps,\Psi_i-\eps]$, rotation (marked 2) straightens
    the angle at the new position of~$\ell'_{i-1}$.
  }
  \label{fig:TwoStep}
\end{figure}
In that projection, $Q$ projects to a convex polygon $Q'$, and so the
projection of the suffix $f_i, f_{i+1}, \dots, f_k$ of untouched faces is
within that polygon.
Face $f_{i-1}$ projects to the segment $f'_{i-1}$ connecting the points
$\ell'_i$ and $\ell'_{i-1}$ representing the lines $\ell_i$ and~$\ell_{i-1}$.
Just like in the invariant of Lemma~\ref{lem:path-unroll}, the prefix
$f_0, f_1, \dots, f_{i-2}$ lies in the plane $H$ of $f_{i-2}$, which is
incident to $\ell_{i-1}$ and thus projects to a line~$H'$.

At time $t=\Psi_{i-1}-\eps$, the face $f_{i-1}$ is in its original
position and so does not intersect any of the faces
$f_i, f_{i+1}, \dots, f_k$. The projection of $f_0, f_1, \dots, f_{i-2}$ is
contained in line~$H'$, which is tangent to $Q'$ at point
$\ell_{i-1}'$, assuming $\eps<\psi_{i-1}$.
When $t\in[\Psi_{i-1}-\eps,\Psi_i-2\eps]$, as the angle between $f_i$ and $f_{i-1}$ opens,
the line containing edge $f_{i-1}'$ remains tangent to $Q'$ while
rotating about point $\ell_i'$. Thus $f_{i-1}$ touches $Q$ exactly
at its edge connecting it to $f_i$ and avoids any intersection.

During the motion, $H'$ remains tangent to
the convex hull of $Q'$ and $f_{i-1}'$ at the projection of
$\ell_{i-1}$, assuming $\eps$ is chosen so that
$\pi-\eps$ is larger than the dihedral angle of the edge in
$\ell_{i-1}$ of the convex hull of the suffix $f_{i-1}, f_i, \dots,
f_k$ when the dihedral angle between $f_i$ and $f_{i-1}$ is $\pi$.

At times $t\in[\Psi_i-2\eps,\Psi_i-\eps]$, as the dihedral angle between
$f_{i-2}$ and $f_{i-1}$ opens until those two faces become
coplanar, $H'$ remains tangent to the convex hull of $Q'$ and
$f_{i-1}'$ at the projection of $\ell_{i-1}$ and so does not intersect
$Q'$ until time $\Psi_i-\eps$, when $H'$ becomes collinear with
$f_{i-1}'$. At precisely that instant, $H'$ touches $Q'$ at the
projection of $\ell_i$, and so the faces in the prefix 
$f_0, f_1,\dots, f_{i-1}$ might touch, but not cross, edges still on
$Q$ at the edge between $f_{i-1}$ and~$f_i$.
\end{proof}

To avoid this one-dimensional touching, we modify the algorithm once more. 

\begin{algorithm}[Path-Waltz]
  Suppose we are given a serpentine unfolding of a polyhedron $Q$
  whose dual path is $P = \langle f_0, f_1, \dots, f_k \rangle$. Let
  $0<\phi_i<\pi$  be the dihedral angle between $f_{i-1}$ and $f_i$,
  let $\psi_i=\pi-\phi_i$ be the required folding angle,
  and let $\Psi_i=\sum_{j=1}^i \psi_j$ be the prefix sum for $i=1,2,\dots,k$.
  Choose $\eps>0$ and $\delta>0$ small enough. 
  \begin{itemize}
    \item At time $t\in[0,\Psi_1-\eps]$, uniformly open the dihedral angle between $f_0$ and $f_1$ until they
      form a dihedral angle of $\pi-\eps$. 
    \item At time $t\in[\Psi_1-\eps,\Psi_1-\eps+\delta]$, uniformly
      open the dihedral angle between $f_1$ and $f_2$ by an angle $\delta_2$.
    \item For $i = 2, 3, \dots, k-1$:
      \begin{itemize}
        \item At time $t\in[\Psi_{i-1}-\eps+\delta,\Psi_i-2\eps]$,
          uniformly open the dihedral angle between $f_{i-1}$ and $f_i$ until they
          form a dihedral angle of $\pi-\eps$.
        \item At time $t\in[\Psi_i-2\eps,\Psi_i-2\eps+\delta]$,
          uniformly open the dihedral angle between $f_i$ and
          $f_{i+1}$ by an angle $\delta$.
        \item At time $t\in[\Psi_i-2\eps+\delta,\Psi_i-\eps+\delta]$,
          uniformly open the dihedral angle between $f_{i-2}$ and $f_{i-1}$
          until those two faces become coplanar.
        \end{itemize}
      \item At time $t\in[\Psi_{k-1}-\eps+\delta, \Psi_k-\eps]$,
        uniformly open the dihedral angle between $f_{k-1}$
        and~$f_k$ until those two faces become coplanar.. 
      \item At time $t\in[\Psi_k-\eps, \Psi_k]$, uniformly open the
        dihedral angle between $f_{k-2}$ and $f_{k-1}$
        until those two faces become coplanar.
      \end{itemize}
\end{algorithm}

\begin{figure}
  \centering
  \includegraphics[scale=0.6]{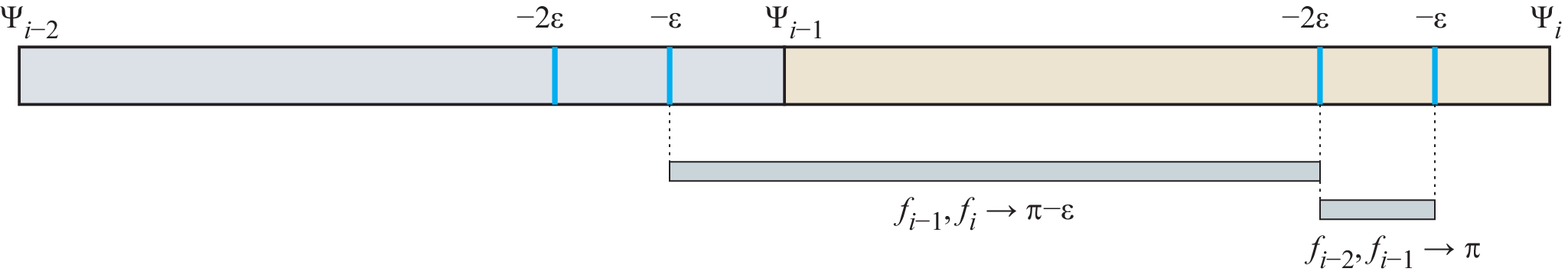}
  \includegraphics[scale=0.6]{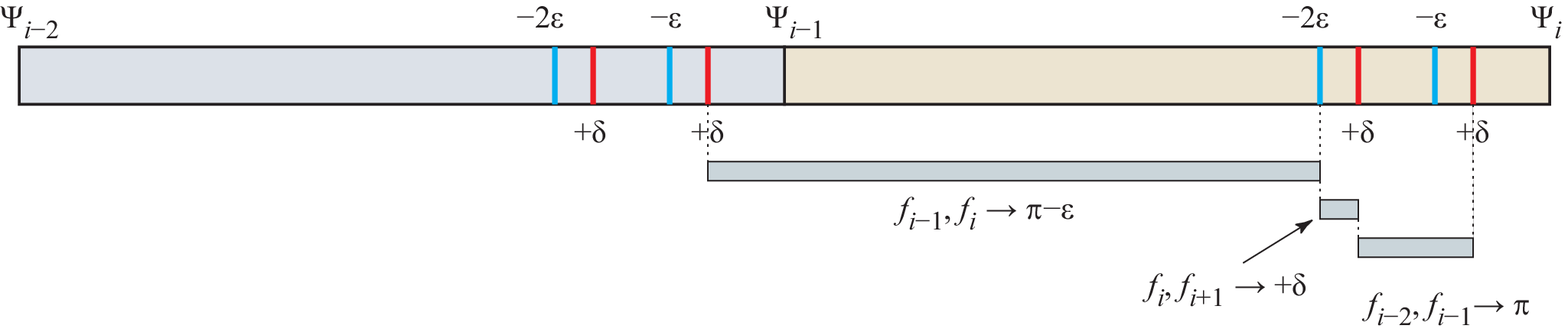}
  \caption{
    Timing diagrams for Path-TwoStep (top) and Path-Waltz (bottom).
  }
  \label{fig:timing}
\end{figure}

Figure~\ref{fig:timing} shows a timing diagram of the Path-TwoStep and
Path-Waltz algorithms.  Path-Waltz avoids all touching except that present
in the original unfolding, proving our desired theorem about serpentine
unfoldings having continuous bloomings:

\medskip

\begin{proof}[of Theorem~\ref{Path-Unroll theorem}]
  We prove that Path-Waltz causes no intersection. 
  Note that during the times $t\in[\Psi_{i-1}-\eps+\delta,\Psi_i-2\eps]$
  for some~$i$, the configuration is exactly the same as in the Path-TwoStep
  algorithm, and so no intersection occurs during those times. 

  Let $H_i$ be the plane containing face $f_i$. 
  At time $t=\Psi_i-2\eps$, the configuration occurs in
  Path-TwoStep, and thus does not self-intersect. 
  Furthermore, the proof of Lemma~\ref{lem:twostep} shows that the plane
  $H_{i-2}$ does not intersect $Q$ unless the edge shared by $f_i$ and
  $f_{i-1}$ has a common endpoint with the edge shared by $f_{i-1}$ and
  $f_{i-2}$ (and that common endpoint would be~$q_{i-1}$). Assume for
  now that this is not the case.   
  Because the non-self-intersecting configuration space is open,
  $\delta$ can be chosen small enough
  that, at times $t\in[\Psi_i-2\eps, \Psi_i-2\eps+\delta]$, 
  as the angle between $f_i$ and $f_{i+1}$ opens, no faces intersect,
  and $H_{i-2}$ does not intersect~$Q$.
  Note that during those times, the suffix $f_{i-1},f_i,\ldots,f_k$ is
  in a configuration identical to that in the Path-Twostep algorithm
  at times $t'\in[\Psi_i-\eps, \Psi_i-\eps+\delta]$ and again, by the
  proof of Lemma~\ref{lem:twostep}, the plane $H_{i-1}$ does not touch
  the interior of $Q$. Therefore, the open double-wedge between
  $H_{i-1}$ and $H_{i-2}$ remains empty throughout the motion. 
  When, at times 
  $t\in[\Psi_i-2\eps+\delta,\Psi_i-\eps+\delta)$,
  the angle between $f_{i-2}$ and $f_{i-1}$ opens, the faces in the
  suffix $f_{i-2},f_{i-1},\ldots,f_k$ remain in that open double-wedge
  and do not cause any intersection.

  It remains to show that no intersections occur during the motion if
  the edge shared by $f_i$ and $f_{i-1}$ and the edge shared by $f_{i-1}$ and
  $f_{i-2}$ both have $q_{i-1}$ as a common endpoint.
  For this case, notice that an intersection can occur only
  in a small neighborhood of point~$q_{i-1}$.
  The angle $\eps$ can be chosen small enough so
  that the normal vectors to faces $f_i$, $f_{i-1}$, and $f_{i-2}$ all
  have pairwise angles smaller than $\pi/2$, and so the opening does not
  cause intersections with the double wedge.
\end{proof}

\section{Blooming the Source Unfolding}
\label{Blooming the Source Unfolding}

This section proves that the following algorithm continuously blooms
the source unfolding of any convex polyhedron.
Figure~\ref{ChineseTakeOut} shows a simple example
of the algorithm in action.

\begin{algorithm}[Tree-Unroll]
  Let $T$ be the tree of faces formed by the source unfolding,
  rooted at the face containing the source $s$ of the unfolding.
  For each face in the order given by a post-order traversal of~$T$,
  uniformly open the dihedral angle between the face and its parent in $T$
  until the two faces become coplanar.
 \end{algorithm}

\begin{figure}
  \centering
  \includegraphics[scale=0.6]{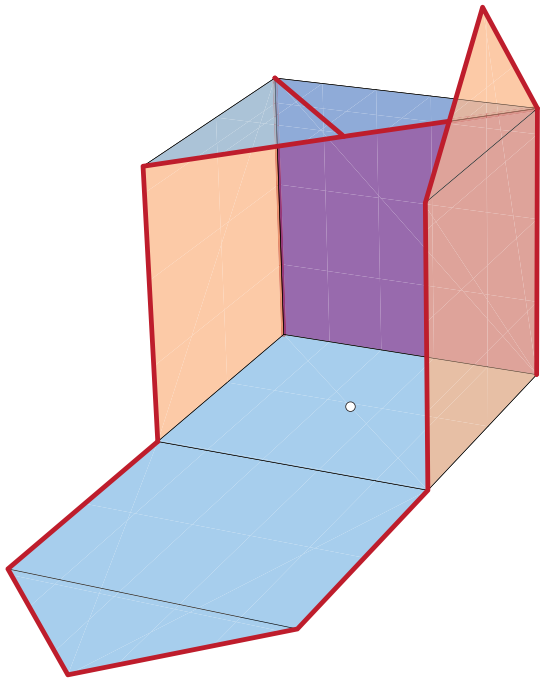}
  \caption{Start of Algorithm Tree-Unroll applied to the source unfolding of
    a cube (with cuts drawn in bold lines) with the source point in the center
    of the bottom face.}
  \label{ChineseTakeOut}
\end{figure}

First we need some tools for arguing about shortest paths and source
unfoldings.  Our first tool shows that source unfoldings change little
when ``growing'' a polyhedron:

\begin{lemma} \label{growing}
  If the interior of a convex polyhedron $Q$ is contained
  in the interior of a convex polyhedron~$Q'$,
  and $p$ and $q$ are points on the boundary of both $Q$ and~$Q'$,
  then the shortest path between $p$ and $q$ is at least as long
  on $Q'$ as on~$Q$.
\end{lemma}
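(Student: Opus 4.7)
The plan is to reduce the statement to the standard fact that the nearest-point projection onto a closed convex set is $1$-Lipschitz in the Euclidean metric. The intuition is that any path on $\partial Q'$ can be ``pressed inward'' onto $\partial Q$ without increasing its length, so the shortest path on $\partial Q$ is no longer than the shortest path on $\partial Q'$.

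Concretely, I would proceed as follows. First, let $\gamma'$ be a shortest path on $\partial Q'$ from $p$ to $q$, so that $|\gamma'|$ equals the geodesic distance between $p$ and $q$ on $Q'$. Second, let $\pi$ denote the nearest-point projection from $3$-space onto the closed convex set $Q$, and define the pushed-in curve $\gamma = \pi \circ \gamma'$. Since $\mathrm{int}(Q) \subseteq \mathrm{int}(Q')$, every point of $\partial Q'$ lies outside $\mathrm{int}(Q)$, so $\pi$ maps $\gamma'$ into $\partial Q$; and because $p, q \in Q$, the projection fixes them, so $\gamma$ is a continuous curve on $\partial Q$ from $p$ to $q$. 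Third, I would invoke the $1$-Lipschitz property of $\pi$ to conclude that the rectifiable length of $\gamma$ in $3$-space is at most that of $\gamma'$. Finally, since the intrinsic length of a rectifiable curve on $\partial Q$ agrees with its Euclidean length (the intrinsic distance being defined as an infimum of such Euclidean lengths), $\gamma$ is a candidate path on $\partial Q$ from $p$ to $q$ of length at most $|\gamma'|$, which bounds the geodesic distance on $Q$ from above by $|\gamma'|$.

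The only nontrivial ingredient is the $1$-Lipschitz property of $\pi$, which is classical and follows from the supporting-hyperplane characterization of closest points in a convex body: if $a = \pi(x)$ and $b = \pi(y)$, then $\langle x - a, b - a\rangle \leq 0$ and $\langle y - b, a - b\rangle \leq 0$, and summing these inner-product inequalities forces $|a - b| \leq |x - y|$. I do not expect a serious obstacle; the only points of care are verifying that $\pi(\gamma')$ actually lies on $\partial Q$ rather than drifting into $\mathrm{int}(Q)$ (ensured because $\gamma' \subset \partial Q'$ avoids $\mathrm{int}(Q)$, so the nearest point of $Q$ to any point of $\gamma'$ is a boundary point of $Q$) and that curve length is unambiguous between the surface and ambient settings.
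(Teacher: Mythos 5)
Your proof is correct, but it takes a genuinely different route from the paper's. The paper first identifies the intrinsic shortest path between two boundary points of a convex body with the shortest path in $3$-space that avoids the (interior of the) body as an obstacle---any excursion off the surface could be replaced by a chord, which by convexity lies in the body---and then concludes by monotonicity: shrinking the obstacle from $Q'$ to $Q$ only enlarges the set of admissible paths, so the minimum length can only decrease. You instead build an explicit competitor on $\partial Q$ by pushing the $Q'$-geodesic inward via the nearest-point projection onto $Q$ and invoking its $1$-Lipschitz property (the Busemann--Feller lemma); the checks you flag---that the projected curve stays on $\partial Q$ because $\partial Q'$ avoids $\mathrm{int}(Q)$, that $p$ and $q$ are fixed, and that intrinsic and ambient lengths agree for curves on the surface---are exactly the right ones and all go through. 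The paper's argument is shorter once the obstacle-avoidance characterization is granted, and it sidesteps any discussion of curve rectifiability; yours is constructive (it exhibits the shorter path on $Q$) and rests only on the classical variational inequality for convex projection, which you prove correctly by summing the two inner-product inequalities. Either proof is acceptable.
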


\begin{proof}
  The shortest path in 3D between two points $p$ and $q$ on the boundary of a
  convex polyhedron~$Q$, while avoiding $Q$ as an obstacle, lies entirely on
  the surface of~$Q$: otherwise, each connected part of the shortest path
  not intersecting $Q$ would be a segment joining two points on the
  surface of~$Q$, but then by convexity of~$Q$, that entire segment is
  in or on~$Q$, a contradiction.  
  Thus the shortest path between $p$ and $q$ on the surface of $Q$ is the
  same as the shortest 3D path between $p$ and $q$ avoiding obstacle~$Q$.
  Because the interior of $Q$ is a subset of the interior of~$Q'$,
  the set of 3D paths from $p$ to $q$ avoiding obstacle $Q'$ is a subset
  of the set of 3D paths from $p$ to $q$ avoiding obstacle~$Q$,
  so the shortest path in the first set is at least as long as the
  shortest path in the second set.
\end{proof}

Next we consider the behavior of individual shortest paths during Tree-Unroll,
which degenerates to the Path-Unroll algorithm of
Section~\ref{Blooming a Refinement of Any Unfolding}:

\begin{lemma}\label{lem:tree-path-unroll}
  Suppose a shortest path on a convex polyhedron $Q$ is the
  3D polygonal path $P = \langle p_0, p_1, \dots, p_k \rangle$,
  with segment $p_{i-1} p_i$ contained in face $f_i$ of~$Q$
  and point $p_i$ on an edge of $Q$ for $0 < i < k$.
  Consider the motion of the polygonal path $P$ induced by
  running the Path-Unroll algorithm on the faces $f_1, f_2, \dots, f_k$.
  Then the path $P$ does not hit the plane containing $f_k$
  (except at $p_{k-1} p_k$) until the very end of the motion,
  when the whole path $P$ hits that plane.
\end{lemma}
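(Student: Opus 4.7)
The plan is to work in the reference frame that holds $f_k$ fixed throughout, so that its plane $\Pi$ is fixed. In this frame, step $j$ of Path-Unroll (for $j=1,\ldots,k-1$) rotates the currently coplanar prefix $f_1,\ldots,f_j$ as a rigid plate about the edge $L_j=f_j\cap f_{j+1}$, while the suffix $f_{j+1},\ldots,f_k$ stays at its original position on~$Q$. Because $\Pi$ supports the convex polyhedron $Q$ at $f_k$, the suffix portion of $P$ lies on $Q$ and meets $\Pi$ only along $p_{k-1}p_k\subset f_k$. The lemma therefore reduces to showing that the moving prefix portion $p_0,\ldots,p_{j-1}$, together with the segments joining them, stays strictly below $\Pi$ for every $t<T_{k-1}$, and lands in $\Pi$ exactly at $t=T_{k-1}$ (where $T_{k-1}$ denotes the end of the motion).

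The argument rests on two ingredients. First, a tangency statement in the style of Lemma~\ref{lem:path-unroll}: during step $j$, the plane $\Pi_{\mathrm{prefix}}(t)$ containing the coplanar prefix is a supporting plane of $Q$ through $L_j$, since the opening of the dihedral moves $\Pi_{\mathrm{prefix}}(t)$ monotonically through the connected arc of supporting planes of $Q$ through $L_j$ on the outside of $Q$, between $\Pi_j^Q$ (supporting at facet $f_j$) and $\Pi_{j+1}^Q$ (supporting at facet $f_{j+1}$). Second, the shortest-path property: throughout step $j$, the prefix portion $p_0,\ldots,p_j$ is a straight line segment in $\Pi_{\mathrm{prefix}}(t)$, emanating from the fixed point $p_j\in L_j$.

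The cleanest case is the final step $j=k-1$, where $L_{k-1}\subset\Pi$. Choosing local coordinates with $L_{k-1}$ along the $z$-axis and $\Pi$ as the plane $\{y=0\}$, every prefix point at perpendicular distance $r>0$ from $L_{k-1}$ has signed height $-r\sin\alpha(t)$ above $\Pi$, where $\alpha(t)\in[\phi_k,\pi]$ is the current interior dihedral between $f_{k-1}$ and $f_k$. This quantity is strictly negative for $\alpha(t)<\pi$ and vanishes precisely at $\alpha(t)=\pi$, i.e., exactly at the very end of the motion, which yields the lemma for this step.

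For the earlier steps $j<k-1$, the axis $L_j$ is strictly below $\Pi$ (since neither $f_j$ nor $f_{j+1}$ is $f_k$), and I would proceed by induction on~$j$. At the transition times $T_{j-1}$ and $T_j$ the prefix segment lies in the supporting planes $\Pi_j^Q$ and $\Pi_{j+1}^Q$ of $Q$ at facets strictly below $\Pi$, extending from $p_j\in L_j$ into the interior of those facets, and its endpoints are strictly below $\Pi$ by the inductive hypothesis. The main obstacle I anticipate is ruling out that, during the rigid rotation inside step~$j$, an interior point of the prefix segment momentarily sweeps above $\Pi$ before returning below; I plan to address this via a cross-section perpendicular to $L_j$, in which $Q$ projects to a convex set $\tilde Q$ with the axis-point on its boundary, $\Pi$ restricts to a supporting line of $\tilde Q$ not through the axis-point, and the arc of supporting directions at the axis-point through which $\Pi_{\mathrm{prefix}}(t)$ sweeps — together with the endpoint conditions from the inductive hypothesis at $T_{j-1}$ and $T_j$ — pins the rigid rotating segment inside the half-plane below $\Pi$ throughout the rotation.
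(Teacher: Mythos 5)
Your reduction is sound as far as it goes: only the moving prefix matters, the developed prefix is a straight segment of fixed length emanating from the fixed point $p_{j}$ on the rotation axis $L_j$, and your computation for the final step ($L_{k-1}\subset\Pi$) is correct. The gap is exactly the obstacle you flag for the intermediate steps, and the fix you sketch does not work. During step $j$ the endpoint $p'_0(t)$ traces a circular arc about the line $L_j$; knowing (inductively) that the two ends of that arc lie below $\Pi=h_k$ does not prevent the arc from bulging above $\Pi$ in between, and the tangency fact does not help: a supporting plane of $Q$ through $L_j$ generically \emph{crosses} $h_k$ in a line, so roughly half of $\Pi_{\mathrm{prefix}}(t)$ lies strictly above $h_k$, and a sufficiently long straight prefix segment in that plane could perfectly well reach the upper half. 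Your proposed cross-section perpendicular to $L_j$ cannot detect this, because $h_k$ is in general not parallel to $L_j$: projecting along $L_j$ it fills the whole cross-section plane rather than restricting to a supporting line of $\tilde Q$, and the coordinate you discard (position along $L_j$) is precisely the one that decides which side of $h_k$ a point is on.

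What is missing is a genuine use of the fact that $P$ is a \emph{shortest} path, not merely a geodesic. You invoke shortest-ness only to get straightness of the development, but every geodesic develops straight, and the lemma is false for long non-shortest geodesics (e.g., one winding around a prism: its long straight developed prefix, tangent to the body along an edge, pierces the planes of later faces well before the end of the unrolling). The paper closes this hole by introducing the grown polyhedron $Q[P]=h_1^+\cap\cdots\cap h_k^+$, showing (via Lemma~\ref{growing} and an isosceles-trapezoid length inequality, Claim~\ref{clm:stillshortest}) that the partially unrolled path remains a shortest path on the correspondingly modified grown polyhedron, and then observing that if $p'_0$ ever reached a plane $h_m$ with $m$ beyond the current step, the chord $p'_0p_m$ lying in $h_m$ would be a shortcut for a nonplanar subpath, contradicting minimality. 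Some argument of this global flavor --- converting shortest-ness on $Q$ into a bound on how far the tangent prefix segment can reach --- is needed to rule out the ``momentary sweep above $\Pi$,'' and your proposal does not supply one.
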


\begin{proof}
  Let $h_i$ be the plane containing face $f_i$,
  and let $h_i^+$ be the closed halfspace bounded by $h_i$ and containing~$Q$.
  We define the \emph{grown polyhedron}
  $Q[P] = h_1^+ \cap h_2^+ \cap \cdots \cap h_k^+$
  to be the intersection of these halfspaces (which may be unbounded).
  The interior of $Q[P]$ contains the interior of~$Q$,
  and the endpoints of $P$ are on the boundary of $Q[P]$,
  so by Lemma~\ref{growing}, $P$~remains a shortest path in $Q[P]$.
  The heart of our proof will be maintaining this invariant
  throughout the motion.

  Now consider executing the first step of Path-Unroll,
  that is, rotating the first face $f_1$ to open the dihedral angle
  between $f_1$ and~$f_2$.
  As we perform the rotation, we keep the first edge $p'_0 p_1$ of the
  modified path $P'$ on the rotating face $f'_1$ and we maintain the
  grown polyhedron $Q[P']$.  Below we prove the following invariant:

  \begin{claim}\label{clm:stillshortest}
    The modified path $P'$ remains a shortest path on the grown
    polyhedron $Q[P']$, as long as $P'$ remains on the surface of $Q[P']$.
  \end{claim}

  Assuming this claim for now, one problem can still arise:
  if $p'_0$ reaches some plane $h_i$ defining one of the other faces $i > 2$
  (and thus enters that face),
  then we cannot continue rotating $f'_1$ (leaving $f_i$ behind)
  while keeping $P'$ on the boundary of $Q[P']$.
  But if this happens, then $P'$ could not actually be a shortest path,
  because there would be a shortcut $p'_0 p_i$ that lies within~$h_i$,
  instead of following the path $P'$ between $p'_0$ and~$p_i$
  (which is nonplanar because $i > 2$),
  contradicting the claim that $P'$ is shortest.

  Therefore the motion works all the way to when the dihedral angle between
  $f_1$ and $f_2$ is flat, reducing the number of faces in $Q[P']$.
  By induction, we can continue all the way and flatten the entire path, and
  by the previous argument, only at the end do we hit the final plane~$h_k$.
\end{proof}

It remains to prove Claim~\ref{clm:stillshortest}:

\medskip

\begin{proof}[of Claim~\ref{clm:stillshortest}]
  Suppose for contradiction that there is a shorter path $\tilde P$
  between $p'_0$ and $p_k$ on the grown polyhedron $Q[P']$;
  refer to Figure~\ref{trapezoid claim}.
  Let $\tilde p$ be the last point along the path $\tilde P$ 
  that touches the plane~$h_1$.  Such a point exists because $p'_0$ and
  $p_k$ are on opposite sides of~$h_1$, given that $p_k$ lies on the
  polyhedron $Q[P]$ and $p'_0$ has been rotated from a point $p_0$ on $h_1$
  by less than $180^\circ$ around a line $h_1 \cap h_2$ on~$h_1$.
  Because $\tilde p$ lies on the plane $h_1$ and within the halfspaces
  $h_2^+, h_3^+, \dots, h_k^+$ (being on the surface of~$Q[P']$),
  $\tilde p$~is on the surface of~$Q[P]$.
  Thus the segment $p_0 \tilde p$ lies on the original face~$f_1$.
  Similarly, the subpath $\tilde P[\tilde p,p_k]$ of $\tilde P$ from $\tilde p$
  to $p_k$ lies on the surface of $Q[P]$, because it lies on $Q[P']$
  and remains in the halfspace~$h_1^+$.
  Because $P$ is a shortest path from $p_0$ to $p_k$ on~$Q$, its length
  $|P| = |p_0 p_1| + |P[p_1 p_k]|$ is at most the length of any particular
  path, namely, $|p_0 \tilde p| + |\tilde P[\tilde p,p_k]|$.
  
  We now show that $|p'_0 \tilde p| \geq |p_0 \tilde p|$,
  which implies that $|\tilde P| = |p'_0 \tilde p| + |\tilde P[\tilde p,p_k]|
  \geq |p_0 \tilde p| + |\tilde P[\tilde p,p_k]| \geq |P| = |P'|$,
  contradicting that $\tilde P$ was shorter than~$P'$.
  Consider the triangle $p_0 p_1 \tilde p$, which lies in the plane $h_1$,
  and the rotated version of that triangle in $h'_1$, $p'_0 p_1 \tilde p'$.
  The quadrilateral $\tilde p p_0 p'_0 \tilde p'$ is an isosceles trapezoid,
  because $|p_0 \tilde p| = |p'_0 \tilde p'|$ and
  $p_0 p'_0$ is parallel to $\tilde p \tilde p'$
  (both being perpendicular to the bisecting plane of $h_1$ and~$h'_1$).
  Therefore the diagonal $\tilde p p'_0$ is longer than the equal sides,
  proving the last claim.
\end{proof}

\begin{figure}
  \centering
  \includegraphics{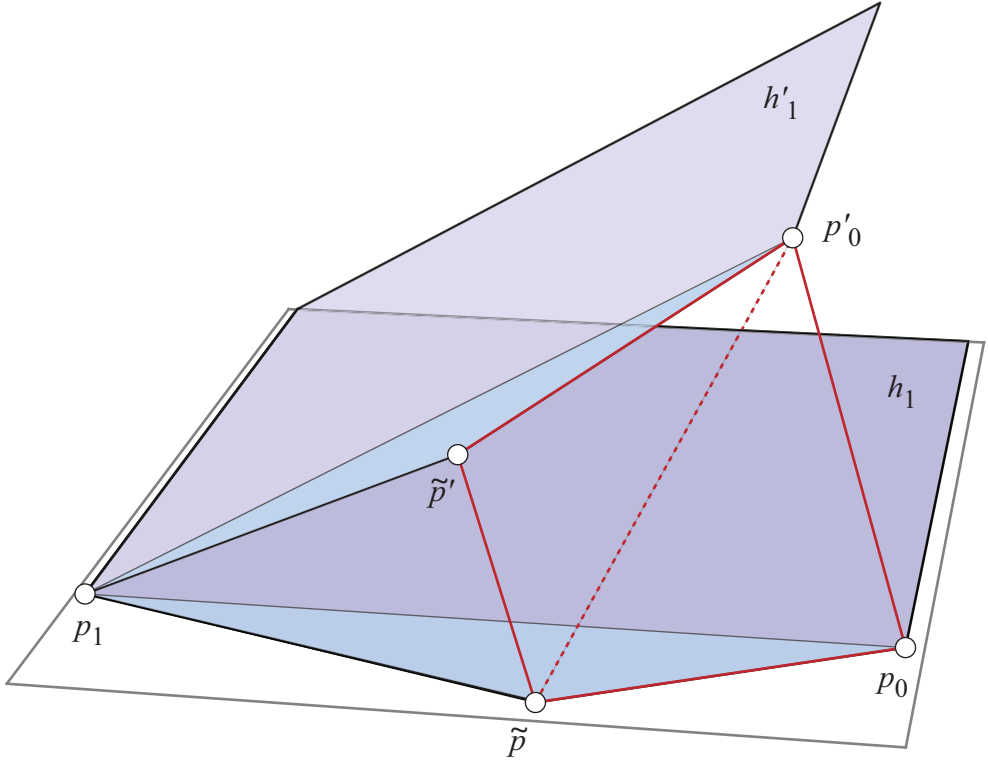}
  \caption{Proof of Claim~\ref{clm:stillshortest}}
  \label{trapezoid claim}
\end{figure}

Recasting Lemma~\ref{lem:tree-path-unroll} into the context of Tree-Unroll,
we obtain the following:

\begin{lemma}\label{lem-subtree}
  During the recursion of Tree-Unroll on some face $f$ of the tree~$T$,
  the subtree $T_f$ of faces in $T$ rooted at $f$ does not hit the plane
  containing the parent $\hat f$ of $f$ in $T$ until the algorithm has
  visited all of~$T_f$, at which time the entire subtree $T_f$
  simultaneously touches that plane.
\end{lemma}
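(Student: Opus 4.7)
The plan is to reduce the subtree-level claim to a statement about individual points by invoking Lemma~\ref{lem:tree-path-unroll}. Fix any $x\in T_f$, let $g$ be the face of $T_f$ containing $x$, and let $g_0=f, g_1,\ldots,g_n=g$ be the tree path in $T$ from $f$ down to $g$. By the defining property of the source unfolding, a shortest path on $Q$ from $s$ to $x$ passes through the faces $\ldots,\hat f, f, g_1, \ldots, g_n$ in this order, and in the generic case enters the interior of $\hat f$. I would pick such an interior point $q$ on this geodesic and let $\pi$ be the shortest subpath from $q$ to $x$, so that $\pi$ traverses $\hat f, f, g_1, \ldots, g_n$ in sequence.

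The next step is to identify exactly which Tree-Unroll operations actually move $\pi$. During the post-order recursion on $T_f$, only the rotations of $g_n$, then $g_{n-1}$, \ldots, then $g_1$, and finally $f=g_0$ (processed in this order because each $g_i$ is a descendant of $g_{i-1}$ in $T$) act on $\pi$; every other operation processes a sibling subtree disjoint from the $g_i$-chain and leaves $\pi$ and $x$ stationary. When $g_i$ is processed, its entire subtree $T_{g_i}$ has already been flattened into $g_i$'s plane by the post-order recursion, and the operation rigidly swings $g_i$ together with this coplanar subtree around the hinge shared with $g_{i-1}$ (setting $g_{-1}:=\hat f$) until the two become coplanar. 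Compressing away the trivial pauses, the motion of $\pi$ therefore coincides exactly with the motion induced by running Path-Unroll on the face sequence $g_n, g_{n-1}, \ldots, g_1, f, \hat f$, with the deepest face $g_n$ in the role of the ``prefix'' that rotates first.

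I would then apply Lemma~\ref{lem:tree-path-unroll} to the reversed polygonal path $\tilde\pi$ from $x$ back to $q$, whose face sequence is precisely $g_n, g_{n-1}, \ldots, g_1, f, \hat f$ and whose terminal face is $\hat f$. The lemma guarantees that $\tilde\pi$ never meets $h_{\hat f}$, except along its last segment which already lies inside $\hat f$, until the very end of the Path-Unroll motion, at which instant the entire $\tilde\pi$ lies in $h_{\hat f}$. This terminal instant corresponds precisely to the completion of the rotation of $f$ onto $\hat f$ in Tree-Unroll, i.e., the final moment of processing $T_f$. Since $x$ lies on $\tilde\pi$ strictly outside the segment in $\hat f$, it does not enter $h_{\hat f}$ before that instant and lies in $h_{\hat f}$ exactly at it. Letting $x$ range over all of $T_f$ then yields both halves of the lemma.

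The main obstacle I expect is setting up the Tree-Unroll/Path-Unroll correspondence carefully: Tree-Unroll rotates the ``subtree'' side of each hinge whereas Path-Unroll rotates the prefix, so one must reverse the face-sequence indexing and verify that the rigid motions induced on $\pi$ genuinely agree once the intervening sibling-subtree work is contracted away. A secondary concern is degenerate geodesics that meet $\hat f$ only along an edge or at a vertex rather than passing through its interior; for these one can perturb the endpoint $x$ and appeal to continuity, or directly extend the witness geodesic into $\hat f$ along a valid ancestor direction.
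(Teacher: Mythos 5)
Your proposal is correct and follows essentially the same route as the paper: both reduce the claim to Lemma~\ref{lem:tree-path-unroll} by taking the shortest path from the source through $\hat f$ and $f$ to an arbitrary point of $T_f$, and observing that Tree-Unroll's post-order processing moves this path exactly as Path-Unroll would on its reversed face sequence ending at $\hat f$. The paper states this more tersely as a proof by contradiction, whereas you argue it directly and spell out the Tree-Unroll/Path-Unroll correspondence in more detail, but the underlying argument is the same.
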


\begin{proof}
  Suppose for contradiction that, during the unfolding of~$T_f$,
  some point $p$ of $T_f$ touches the plane containing~$\hat f$.
  By the definition of the source unfolding, the shortest path
  from the source $s$ to $p$ passes through face $\hat f$, say
  at some point $q$, and through face~$f$.
  Then, because Tree-Unroll unfolds that shortest path one face at a time,
  from $p$ to~$q$,
  Lemma~\ref{lem:tree-path-unroll} implies that the entire shortest path from $q$
  to $p$ has already been unfolded onto the plane containing~$\hat f$
  (given that $p$ is already there).
  But then face $f$ has also been unfolded onto the plane containing $\hat f$,
  which happens only when the algorithm has completed the recursion on~$f$.
\end{proof}

Finally we can prove the desired correctness of Tree-Unroll:

\begin{theorem}
  The Tree-Unroll algorithm causes no intersection during the blooming motion,
  for any source unfolding of any convex polyhedron.
\end{theorem}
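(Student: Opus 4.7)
The plan is to prove non-intersection throughout the motion by induction on the size of $T$, with Lemma~\ref{lem-subtree} handling most cases and a grown-polyhedron argument handling the one case that Lemma~\ref{lem-subtree} does not directly dispatch. The base case $|T|=1$ is trivial.

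For the inductive step, let $f_0$ be the root of $T$ with children $c_1,\ldots,c_q$ in the post-order Tree-Unroll uses. The algorithm processes $T_{c_1},\ldots,T_{c_q}$ sequentially, and during the sub-processing of any $T_{c_i}$ only its faces move. I would decompose the possible collisions during that sub-processing into three categories: (i) among the moving faces of $T_{c_i}$; (ii) between $T_{c_i}$ and the pieces already flattened into $h_{f_0}$ (i.e., $f_0$ and the previously processed $T_{c_j}$ with $j<i$); and (iii) between $T_{c_i}$ and the still-unprocessed siblings $T_{c_j}$ with $j>i$, which sit on $Q$ in their original positions.

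Case (i) is handled by the inductive hypothesis applied to the recursive Tree-Unroll execution on $T_{c_i}$: the motions are identical, and Lemmas~\ref{growing}, \ref{lem:tree-path-unroll}, and~\ref{lem-subtree} use $Q$ only as a convex obstacle, so they remain valid verbatim at every nested level. Case (ii) is ruled out directly by Lemma~\ref{lem-subtree}, which keeps $T_{c_i}$ strictly on the $Q$-side of $h_{f_0}$ except at the hinge edge with $f_0$ until the very final instant of $T_{c_i}$'s processing; at that instant $T_{c_i}$ enters $h_{f_0}$ in exactly the position it occupies in the global source unfolding, so non-overlap with the other flattened pieces follows from the global source unfolding being non-self-overlapping.

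The third case is the main obstacle. I plan to handle it with a grown-polyhedron argument, defining $Q[T_{c_i}](t)$ as the intersection of the closed half-spaces bounded by the current planes of the faces of $T_{c_i}$, each oriented to contain $Q$. Since Tree-Unroll only opens dihedrals outward (from the original $\phi_g\in(0,\pi)$ toward $\pi$), each of these half-spaces contains $Q$ throughout the motion, so $Q\subseteq Q[T_{c_i}](t)$ and the unprocessed siblings $T_{c_j}$ lie inside $Q[T_{c_i}](t)$. By construction, every face of $T_{c_i}$ lies on $\partial Q[T_{c_i}](t)$. The delicate step will be verifying that for $t>0$ each rotated face-plane of $T_{c_i}$ is tangent to $Q$ only along its hinge edge, so that the unprocessed siblings can meet $T_{c_i}$ only along shared cut edges of the source unfolding, which is the boundary-only touching permitted by a blooming rather than any interior overlap. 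With the three cases handled, the induction closes and the theorem follows.
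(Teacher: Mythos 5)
Your three-way decomposition---(i) collisions internal to the moving subtree, (ii) collisions with pieces already flattened into $h_{f_0}$, (iii) collisions with the surface still on $Q$---mirrors the paper's argument, and (i) and (ii) are handled essentially as the paper handles them (Lemma~\ref{lem-subtree} plus non-overlap of the development). But case (iii) is the heart of the theorem, and you leave its key step---``verifying that each rotated face-plane of $T_{c_i}$ is tangent to $Q$ only along its hinge edge''---explicitly unproven; moreover the grown-polyhedron formulation cannot supply it. Knowing $Q\subseteq Q[T_{c_i}](t)$ and that the faces of $T_{c_i}$ lie on the boundary of $Q[T_{c_i}](t)$ does not prevent a face of $T_{c_i}$ from overlapping a piece of $\partial Q$ that lies \emph{in the same supporting plane}. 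That is exactly the dangerous case: the tangency-only-along-an-edge claim is false at the start and end of every rotation step (where the moving plane coincides with a facet plane of $Q$) and false for every stationary flattened piece (which sits in the facet plane of an unmoved ancestor). Since the cut locus can subdivide a single facet of $Q$ among several subtrees, a face of an unprocessed sibling $T_{c_j}$ can lie, in its original position, in the very plane into which part of $T_{c_i}$ has been flattened; global non-overlap of the development says nothing about this, because that face has not been developed. The paper closes this gap with a source-unfolding-specific fact you never invoke: every point of the developed subtree has a shortest path to the parent face crossing the hinge edge, so the developed subtree lies on the opposite side of the hinge line from the entire (convex) parent facet and hence cannot overlap anything of $Q$ lying in that plane. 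Without this (or a substitute), neither case (iii) nor the landing instant of case (ii) is established.

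A secondary weakness: your justification that each half-space of $Q[T_{c_i}](t)$ contains $Q$---``Tree-Unroll only opens dihedrals outward''---is not sufficient on its own. The containment holds because every hinge of the source-unfolding tree lies along an actual edge of $Q$, so a plane rotating about it sweeps through supporting planes of $Q$; opening a dihedral about a hinge interior to a facet would immediately break tangency. This is precisely the observation the paper makes explicitly (the plane of the moving subtree meets $Q$ at the facet of $f$ at the start, at the edge between $f$ and $\hat f$ during the motion, and at the facet of $\hat f$ at the end), and your argument needs it stated and used.
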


\begin{proof}
  Each step of the algorithm unfolds the faces of the subtree $T_f$
  rooted at some face $f$ onto the plane containing the parent $\hat f$ of~$f$.
  During this motion, all other faces are either on the polyhedron~$Q$,
  or on planes containing the faces of the ancestors of $f$ in~$T$.
  Because the development of the subtree $T_f$ is a subset of the development
  of the entire source unfolding, the faces of the subtree $T_f$
  do not intersect each other in the plane containing~$f$.
  Because the faces of $T_f$ are in a supporting plane of the
  polyhedron~$Q$, those faces do not intersect any of the faces still on~$Q$
  that are not coplanar with~$\hat f$.
  More precisely, the plane containing the subtree $T_f$ intersects $Q$
  at the polyhedron face containing $f$ at the beginning of the motion,
  intersects $Q$ at the edge between $f$ and $\hat f$ during the motion,
  and intersects $Q$ at the polyhedron face containing $\hat f$
  at the end of the motion.
  At the end, when $T_f$ and $\hat f$ become coplanar,
  we know from the definition of the source unfolding
  that there is a shortest path from any point of the developed $T_f$ 
  to a point of $\hat f$ that crosses the edge between faces $f$ and~$\hat f$.
  Thus, the developed $T_f$ lies on the opposite side of the line containing
  this edge compared to the entire (convex) polyhedron face containing~$\hat f$,
  so $\hat f$ and $T_f$ do not intersect in their shared plane.
  Finally, by Lemma~\ref{lem-subtree}, the faces of $T_f$ do not intersect
  any of the planes containing the faces of the ancestors of~$f$,
  until the end of the motion when they all hit the plane containing~$\hat f$,
  and the previous argument applies.
  Therefore the blooming motion causes no intersection. 
\end{proof}

\section{Open Problems}
\label{Open Problems}

Cauchy's Arm Lemma states that opening the angles of a planar open chain
initially in convex position causes the two endpoints to get farther away.
This claim generalizes to 3D motions of an initially planar and convex
chain that only open the joint angles \cite{O'Rourke-2000-Cauchy}.
Our continuous blooming of the source unfolding suggests the following
related problem, phrased in terms of instantaneous motions like Cauchy's Arm
Lemma (but which would immediately imply the same about continuous motions):

\begin{open} \label{polycauchy}
  Consider the chain of faces visited by a shortest path on the surface
  of an arbitrary convex polyhedron.  If we open each
  dihedral angle between consecutive faces, and transform the edges of the
  shortest path along with their containing faces, does the resulting path
  always avoid self-intersection?
\end{open}

Indeed, we might wonder whether every dihedral-monotonic
blooming of the source unfolding avoids intersection.
This problem is equivalent to the following generalization of
Open Problem~\ref{polycauchy}.

\begin{open}
  Consider two shortest paths from a common point $s$ to points $t$ and $t'$
  on the surface of an arbitrary convex polyhedron, and
  consider the two chains of faces visited by these shortest paths.
  If we open each dihedral angle between consecutive faces
  in each chain, and transform the edges of the two shortest paths
  along with their containing faces,
  do the resulting paths always avoid intersecting each other and themselves?
\end{open}

We were unable to resolve either open problem using our techniques,
but it remains an intriguing question whether these analogs of
Cauchy's Arm Lemma underlie our positive results.%
\footnote{On the other hand, it is known that the faces of a convex polyhedron
  intersected by a shortest path can overlap, even when fully developed;
  the example is essentially \cite[Fig.~24.20, p.~375]{Demaine-O'Rourke-2007}
  (personal communication with G\"unter Rote).}

\section*{Acknowledgments}

This work was initiated at the 19th Bellairs Winter Workshop on Computational
Geometry held in Holetown, Barbados from January 30 to February 6, 2004.
We thank the other participants from that workshop---Greg Aloupis,
Prosenjit Bose, Mirela Damian, Vida Dujmovi\'c, Ferran Hurtado,
Erin McLeish, Henk Meijer, Pat Morin, Ileana Streinu, Perouz Taslakian,
Godfried Toussaint, Sue Whitesides, David Wood---for helpful discussions
and for providing a productive work environment.

\bibliography{blooming}
\bibliographystyle{alpha}

\end{document}